\DeclareMathSymbol{\subsetneq}{\mathrel}{AMSb}{"28}
\DeclareMathSymbol{\rightrightarrows}{\mathrel}{AMSa}{"13}
\newcommand{\BR}{\mathbb{R}} 
\newcommand{\BN}{\mathbb{N}} 
\newcommand{\V}{\mathbf{v}} 
\theoremstyle{plain}
\newtheorem{theorem}{Theorem}[section]
\newtheorem{lemma}{Lemma}[section]
\theoremstyle{remark}
\theoremstyle{definition}
\newtheorem{definition}{Definition}
\begin{document}

\title{Proving Parikh's theorem using Chomsky-Sch\"utzenberger theorem}
\author{Dmitry Golubenko}

\address{Faculty of Mathematics, Higher School of Economics, 6~Usacheva str., Moscow, Russia, 117312}
\email{golubenko@mccme.ru}

\begin{abstract}
	Parikh theorem was originally stated and proved in \cite{parikh}. Many different proofs of this classical theorems were produced then; our goal is to give another proof using Chomsky-Sch\"utzenberger representation theorem. We present the proof which doesn't use any formal language theory tool at all except the representation theorem, just some linear algebra. 
\end{abstract}

\maketitle

\tableofcontents

\section{Introduction}
Recall that the {\it context-free grammar} is a 4-tuple $G = (N, \Sigma, P, S)$ where $N$ is a set of nonterminals, $\Sigma$ is an alphabet, $P$ is a collection of productions of kind $X \rightarrow \gamma$ for $X \in N,~~~ \gamma \in (N \cup \Sigma)^*$, and $S$ is an initial nonterminal. We say that $w \in \Sigma^*$ is {\it derived} in CFG $G$ if there exists a sequence $\{s_i\}_{i \in [1;N]},~~~s_i \in (N \cup \Sigma)^*$ such that $s_0 = S,~~~ s_N = w$ and every $s_i$ is obtained from $s_{i-1}$ via applying some production from $P(G)$. The set of all words $w \in \Sigma^*$ derived in $G$ is called the {\it language generated by} $G$ and is denoted by $L(G)$. The formal language $L \subset \Sigma^*$ is called {\it context-free} if $L = L(G)$ for some context-free grammar $G$.

Let $\Sigma = \{a_1, \ldots a_k\}$. We introduce the {\it Parikh map} $\psi: \Sigma^* \rightarrow \BN^k$:
\begin{equation}
\psi(w) = (\#_{a_1}(w), \ldots, \#_{a_k}(w)),
\end{equation}
where $\#_{x}(w)$ is number of occurences of letter $x$ in word $w$. Clearly, this is a monoid homomoprhism since \label{addit} $\psi(w_1 w_2) = \psi(w_1) + \psi(w_2),~~~ \psi(\epsilon) = \mathbf{0}$ where $\mathbf{0} = (0, \ldots, 0)$. For any language $L$ we define {\it Parikh image} of $L$ $\psi(L)$ by
\begin{equation*}
\psi(L) = \{\psi(w)|w \in L\} \subset \BN^k
\end{equation*}
\begin{definition}
	A subset $S \subset \BN^k$ is called {\it linear} if it's a coset of a finitely generated submonoid of $\BN^k$, i. e. there exist $\V_0, \V_1, \ldots \V_N$ such that 
	\begin{equation}
	S = \{\V_0 + \sum\limits_{i=1}^{N} \lambda_i \V_i|\forall i \in [1;N]~~~\lambda_i \in \BN\}
	\end{equation}
	We denote this by $S = \V_0 + \Big\langle \V_1, \ldots \V_N \Big\rangle$.
\end{definition}
\begin{definition}
	A subset $S \in \BN^k$ is {\it semilinear} if it's a union of finite number of linear subsets.
\end{definition}
The following theorem is the classical one in the formal language theorem:
\begin{theorem}[Parikh]
	\label{parikh}
	For any context-free language $L$, $\psi(L)$ is semilinear.
\end{theorem}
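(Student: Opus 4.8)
The plan is to use the Chomsky--Sch\"utzenberger representation theorem to reduce Theorem~\ref{parikh} to the single case of a Dyck language intersected with a regular language, and then to settle that case by an elementary analysis of derivation trees. Recall that the representation theorem furnishes, for every context-free $L\subseteq\Sigma^{*}$, an integer $n$, a regular language $R$ over the bracket alphabet $\{a_1,\bar a_1,\dots,a_n,\bar a_n\}$, a homomorphism $h$ into $\Sigma^{*}$, and an equality $L=h(D_n\cap R)$, where $D_n$ denotes the Dyck language on $n$ pairs of brackets.

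First I would record the closure properties of semilinear sets used throughout, each a one-line computation in the $\V_0+\langle\V_1,\dots,\V_N\rangle$ notation: semilinear subsets of $\BN^{k}$ are closed under finite unions (by definition); under the Minkowski sum $S+T=\{\mathbf s+\mathbf t\}$, since $(\V_0+\langle V\rangle)+(\V_0'+\langle V'\rangle)=(\V_0+\V_0')+\langle V\cup V'\rangle$ and the sum distributes over unions; under the image of any additive (monoid) map $\BN^{k}\to\BN^{l}$, by the same formula; and under passing to the generated submonoid, since grouping a finite sum of generators according to which linear piece each term is drawn from exhibits $\langle\bigcup_i(\V_i+\langle W_i\rangle)\rangle$ as a finite union of linear sets (in particular these already yield Parikh's theorem for regular languages via Kleene's theorem). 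More to the point: since $\psi(h(w))$ depends only on $\psi(w)$ --- in fact $\psi\circ h$ is the additive map whose matrix has columns $\psi(h(a_j))$ and $\psi(h(\bar a_j))$, which have entries in $\BN$ --- the first three properties reduce Theorem~\ref{parikh} to proving that $\psi(D_n\cap R)$ is semilinear.

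To attack $D_n\cap R$ I would fix a deterministic automaton $(Q,\delta,q_0,F)$ for $R$ and build from it a context-free grammar $G'$ in a rigid normal form: the nonterminals are the pairs $(p,q)\in Q\times Q$, with productions $(p,q)\to(p,r)(r,q)$ for every $r$, a production $(p,q)\to a_i(r,s)\bar a_i$ whenever $\delta(p,a_i)=r$ and $\delta(s,\bar a_i)=q$, and $(p,p)\to\epsilon$, the start symbol deriving the $(q_0,q)$ with $q\in F$. Using the first-matched-bracket decomposition of a Dyck word together with bookkeeping of the automaton states, one checks that $(p,q)$ generates exactly $\{w\in D_n:\delta^{*}(p,w)=q\}$, so $L(G')=D_n\cap R$. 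The point of this detour is that in $G'$ every production is either $\epsilon$-valued or outputs a single matched bracket pair, and every production is short; this is what will keep the final step finitary.

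The core step, and the one I expect to be the main obstacle, will be to prove directly that $\psi(L(G'))$ is semilinear. The argument I have in mind is the pumping analysis of derivation trees: if a derivation tree of $w$ has a root-to-leaf path longer than $|Q|^{2}$, some nonterminal $A$ repeats on it, yielding $A\Rightarrow^{*}\alpha A\beta$ with terminal words $\alpha,\beta$; excising this pumpable segment gives a strictly shorter $w_0\in L(G')$ with $\psi(w)=\psi(w_0)+\psi(\alpha)+\psi(\beta)$, and iterating shows that every $\psi(w)$ lies in $\psi(w_0)+\langle\text{pump vectors}\rangle$ for some ``reduced'' tree $w_0$. Reduced trees have bounded depth, hence are finite in number; restricting to minimal pumps $A\Rightarrow^{+}\alpha A\beta$ (with no internal repetition) makes the set of pump vectors finite as well. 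The delicate point --- and it is the same one present in every proof of Parikh's theorem --- is to pair the two sides so that the finite union of linear sets one writes down is \emph{equal} to $\psi(L(G'))$ rather than a mere superset: one must attach to each reduced word only those pump vectors whose nonterminal genuinely occurs in one of its derivations, and conversely verify that every such re-insertion stays inside $L(G')$. Granting this, the closure properties of the second paragraph together with the representation-theorem reduction of the first then give that $\psi(L)$ is semilinear for every context-free $L$.
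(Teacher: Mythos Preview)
Your proposal is correct and follows the same three-step skeleton as the paper: invoke Chomsky--Sch\"utzenberger to write $L=h(D_n\cap R)$, show that $\psi(D_n\cap R)$ is semilinear by a derivation-tree pumping argument on a concrete grammar generating $D_n\cap R$, and conclude via closure of semilinearity under additive images. The one substantive difference is which grammar you pump on. You build $G'$ from scratch via the triple construction on a DFA for $R$, with nonterminals $Q\times Q$; the paper instead observes that Kozen's proof of the representation theorem already hands you a grammar $\widetilde G$ for $Dyck_{2|P|}\cap R_L$ whose nonterminal set coincides with that of the original grammar $G$, so no extra construction is needed and the tree bounds are in terms of $|N|$ rather than $|Q|^{2}$. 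Your route is more self-contained in that it treats $R$ as a black-box regular language, but it also makes the Chomsky--Sch\"utzenberger detour look dispensable: your core step is a direct Parikh-style proof for the particular CFG $G'$, and nothing stops one from applying that same pumping argument to the original grammar for $L$ and skipping the representation theorem entirely. The paper's version at least ties the pumping to the specific grammar produced by the representation theorem, though one can level a similar objection there as well.
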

The theorem was originally stated and proved by Rokhit Parikh in \cite{parikh}. Parikh's strategy was to present all parse trees as a union of a finite number of classes each having one minimal tree and all other obtained from that minimal by insertion of a {\it pump} tree.

We're gonna prove the theorem \ref{parikh} using another classical theorem:
\begin{theorem}[Chomsky-Sch\"utzenberger]
	Every context-free language $L \subset \Sigma^*$ admits the following presentation
	\begin{equation}
	L = h(Dyck_N \cap R),
	\end{equation}
	where $Dyck_N$ is a language of balanced parentheses of $N$ types, $R$ is a regular language and $h: \{(_1, )_1, \ldots (_N, )_N\}^* \rightarrow \Sigma^*$ is a language homomorphism.
\end{theorem}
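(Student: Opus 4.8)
The plan is to realise the classical ``bracketed encoding'' of parse trees and to show that the set of legal encodings is carved out of the Dyck language by purely \emph{local} (hence regular) conditions, after which the homomorphism $h$ simply erases the brackets. First I would replace $G$ by a grammar in Chomsky normal form generating $L \setminus \{\epsilon\}$, so that every production is either $A \to BC$ with $B,C \in N$ or $A \to a$ with $a \in \Sigma$; the empty word, if it belongs to $L$, can be adjoined at the end without affecting the shape of the statement. In this form each word of $L$ is the yield of a binary parse tree labelled by productions, and the task becomes one of encoding such trees faithfully as strings of matching brackets.

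For the encoding I would introduce one pair of matching parentheses for each ordered pair consisting of a nonterminal ``slot'' and a production filling it. Concretely, every binary production $p \colon A \to BC$ has two slots $(p,1)$ and $(p,2)$, and for each slot $\sigma$ and each production $q$ whose left-hand side is the nonterminal required by $\sigma$ I take a bracket pair $\langle_{\sigma,q}, \rangle_{\sigma,q}$; a distinguished root slot handles the start symbol $S$. A subtree with root production $q$ occupying slot $\sigma$ is encoded recursively as $\langle_{\sigma,q}\,E_1\,E_2\,\rangle_{\sigma,q}$ when $q \colon A \to BC$ (with $E_1,E_2$ the encodings of its children in slots $(q,1),(q,2)$) and as $\langle_{\sigma,q}\rangle_{\sigma,q}$ when $q \colon A \to a$. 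The homomorphism is then forced: $h$ sends $\langle_{\sigma,q}$ to $a$ when $q \colon A \to a$, and every other bracket to $\epsilon$, so that $h$ applied to an encoding returns precisely the yield. Letting $N$ be the total number of bracket pairs, every encoding lies in $Dyck_N$ by construction.

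The decisive point is that a word of $Dyck_N$ is the encoding of a legal $S$-rooted parse tree exactly when it obeys a finite list of conditions each involving only two adjacent letters, together with conditions on its first and last letters. I would spell these out: the first letter must be a root-open bracket and the last a root-close bracket, while root brackets are forbidden in any interior position; a factor $\langle_{\sigma,q}\langle_{\tau,q'}$ is admissible only when $\tau=(q,1)$ and the left-hand side of $q'$ is the first right-hand symbol of $q$; a factor $\rangle_{\sigma,q}\langle_{\tau,q'}$ (a passage between siblings) only when $\sigma=(p,1)$ and $\tau=(p,2)$ for one and the same production $p$, with the left-hand side of $q'$ its second right-hand symbol; a factor $\rangle_{\sigma,q}\rangle_{\tau,q'}$ only when $\sigma=(q',2)$; and a factor $\langle_{\sigma,q}\rangle_{\tau,q'}$ only for a terminal production with $\sigma=\tau$ and $q=q'$. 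The language $R$ of words avoiding every forbidden length-two block and carrying an admissible first and last letter is defined by excluding a fixed finite set of factors, hence is local, in particular regular.

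It then remains to verify $h(Dyck_N\cap R)=L$. One inclusion is immediate, since the encoding of any $S$-rooted parse tree lies in $Dyck_N\cap R$ and maps under $h$ into $L$. For the reverse inclusion I would induct on the nesting depth of a word $w\in Dyck_N\cap R$: the confinement of the root brackets to the two ends forces $w$ to be a single block enclosed by one root pair, the Dyck condition matches each $\langle_{\sigma,q}$ with $\rangle_{\sigma,q}$, and the factor conditions then pin down the root production together with the slots of its children and, crucially, force exactly two children for a binary rule and none for a terminal rule; the induction hypothesis applied to the balanced factors inside rebuilds the subtrees, producing a genuine parse tree with yield $h(w)$. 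I expect the main obstacle to be precisely the faithfulness of this reconstruction: with the naive ``one bracket per production'' encoding, the sibling-transition factor $\rangle_{q}\langle_{q'}$ cannot record which production the two subtrees are children of, so spurious trees, and thus extra words in the image, arise whenever two distinct productions share a right-hand symbol. The remedy that the slot indexing is designed to enforce is to let every bracket record simultaneously the slot it fills in its parent and its own root production, so that each admissibility condition is genuinely determined by the two adjacent brackets; checking that this enriched labelling renders all grammar-consistency conditions local while keeping the alphabet finite is the technical heart of the argument.
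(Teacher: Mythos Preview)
The paper does not actually prove the Chomsky--Sch\"utzenberger theorem; it states it in the introduction and, in Section~2, merely \emph{describes} Kozen's construction (citing \cite{kozen}) so as to exploit its specific shape in the proof of Parikh's theorem. Your proposal, by contrast, supplies a complete self-contained argument.

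Your encoding is genuinely different from Kozen's. Kozen attaches the brackets to the \emph{parent}: for $p\colon A\to BC$ he rewrites $A\to (_p B\,)_p\,[_p C\,]_p$, so each production contributes two bracket pairs (round for the left child, square for the right), and the regular language is cut out by the five local conditions reproduced in the paper; in particular the sibling transition is simply $)_p[_p$, which already names the parent $p$. You instead attach a single bracket pair to the \emph{child}, indexing it simultaneously by the parent slot and by the child's own root production. This inflates the alphabet (on the order of $|P|^2$ pairs rather than Kozen's $2|P|$) but buys exactly what you diagnose at the end: every two-letter factor carries the parent identity explicitly, so each admissibility test is unambiguous. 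Kozen gets the same locality more cheaply because his brackets \emph{are} the parent's; your slot-labelling is the natural fix if one insists on child-owned brackets. Both routes are correct, and your induction on nesting depth for the reverse inclusion goes through once one observes that the ``no interior root brackets'' condition together with the sibling rule forces the outermost pair to enclose the whole word.
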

Shamir tried to give a proof using that theorem, however, afterwards he regarded in \cite{goldstine} that his proof was fallacious. We, however, prefer much easier way to obtain our goal. We investigate Chomsky-Sch\"utzenberger construction of $h$, $R$ and $N$, prove directly that $\psi(Dyck_N \cap R)$ semilinear and thus derive the main theorem.

\section{Semilinearity of $Dyck_N \cap R$}
We use the proof of Chomsky-Sch\"utzenberger theorem given in \cite{kozen}. Consider the context-free grammar $G = (N, \Sigma, P, S)$ in Chomsky normal form such that $L(G) = L \setminus \{\epsilon\}$. Define $\widetilde{G} = (N, \widetilde{\Sigma}, \widetilde{P}, S)$, where
\begin{align*}
\widetilde{\Sigma} = \{(_p, )_p, [_p, ]_p | p \in P \}
\end{align*}
and
\begin{align*}
\widetilde{P} = \{ A \rightarrow (_p B )_p [_p C ]_p \mid p = A \rightarrow BC \} \bigcup \{ A \rightarrow (_p  )_p [_p  ]_p \mid p = A \rightarrow a \}.
\end{align*}
Thus we obtain $L(\widetilde{G}) = Dyck_{2|P|} \cap R_L$, where $R \subset \widetilde{\Sigma}^*$ is a regular language defined by the following conditions:
\begin{enumerate}
	\item every $)_p$ is immediately followed by $[_p$;
	\item no $]_p$ is immediately followed by any $(_i$ or $[_i$;
	\item if $p = A \rightarrow BC$, then $(_p$ is immediately followed by $(_q$ for some $q \in P$ and $[_p $ is immediately followed by $(_r$ for some $r \in P$;
	\item if $p = A \rightarrow a$, then $(_p )_p [_p ]_p$ is a subword, i. e. $)_p$ immediately follows $(_p$, $[_p$ follows $)_p$ and so on;
	\item the first letter is $(_p$ for some $p$ with $S$ in the left side.
\end{enumerate}
It is proven in \cite{kozen} that $L(\widetilde{G})$ is actually equal to $Dyck_{2|P|} \cap R_L$. We'll prove now that $\psi(Dyck_{2|P|} \cap R_L)$ is semilinear.

\begin{lemma}
	For $G$ and $R_L$ defined above we have that $\psi(Dyck_{2|P|} \cap R_L)$ is semilinear.
\end{lemma}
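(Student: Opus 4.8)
The key observation is that the grammar $\widetilde{G}$ is in a very rigid form: every production has the shape $A \to (_p B )_p [_p C ]_p$ or $A \to (_p )_p [_p ]_p$, so in any parse tree the letters $(_p, )_p, [_p, ]_p$ all appear together exactly once each for every application of production $p$. Consequently, if we write $n_p(w)$ for the number of times production $p$ is used in a derivation of $w \in Dyck_{2|P|} \cap R_L$, then the Parikh vector $\psi(w)$ is determined by the vector $(n_p)_{p \in P}$ via a fixed linear map: $\psi(w) = M\,(n_p)_p$ where $M$ has columns $\psi((_p) + \psi()_p) + \psi([_p) + \psi(]_p)$, i.e.\ each column of $M$ is a sum of four distinct standard basis vectors of $\BN^{\widetilde\Sigma}$. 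So it suffices to prove that the set $C = \{(n_p(w))_{p\in P} : w \in L(\widetilde G)\} \subseteq \BN^P$ of "production-count vectors" is semilinear, since the linear image of a semilinear set is semilinear (this is immediate from the definitions, as $M$ sends $\V_0 + \langle \V_1,\dots,\V_N\rangle$ to $M\V_0 + \langle M\V_1,\dots,M\V_N\rangle$, and commutes with finite unions).

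The next step is to understand $C$ purely combinatorially. A vector $(n_p)_p$ lies in $C$ iff there is a parse tree of $\widetilde G$ (equivalently of $G$, since they have identical nonterminal structure) using production $p$ exactly $n_p$ times. I would describe this by the "flow" conditions on the nonterminals: for each nonterminal $X$, the number of times $X$ appears as the left-hand side equals the number of times it appears among right-hand sides, except that $S$ appears once more on the right (as the root) — more precisely, letting for each production $p$ the vector $\delta_p \in \BZ^N$ record (right-hand side occurrences of each nonterminal) minus (left-hand side nonterminal), the tree-existence condition is that $\sum_p n_p \delta_p = -e_S$ together with a connectivity/reachability condition ruling out vectors that would only assemble disconnected cycles not attached to the root. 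The linear equation $\sum_p n_p\delta_p = -e_S$ alone cuts out a semilinear set (solution sets of systems of linear Diophantine equations/inequalities over $\BN$ are semilinear — this is a classical fact about $\BN^k$, provable by induction on dimension or by Gordan's lemma on rational polyhedral cones), so the real content is handling the reachability constraint.

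To deal with reachability I would argue that the set $C$ is actually a finite union, over choices of which subset $T \subseteq N$ of nonterminals is "used", of sets cut out by: the balance equation above restricted to coordinates in $T$, the inequalities $n_p \geq 1$ for at least one $p$ producing each reachable nonterminal, plus the requirement that $T$ be exactly the set reachable from $S$ using only productions with positive count — and this last condition, for a \emph{fixed} $T$, is again a conjunction of linear equalities ($n_p = 0$ for $p$ mentioning nonterminals outside $T$) and the constraint "every nonterminal in $T$ is reachable", which can be certified by a bounded amount of combinatorial data (a spanning structure) and hence splits $C\cap\{\text{support}=T\}$ into finitely many pieces each defined by linear constraints. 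Since finitely many semilinear sets union to a semilinear set, $C$ is semilinear, and therefore so is $M(C) = \psi(Dyck_{2|P|}\cap R_L)$.

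The main obstacle I anticipate is precisely the reachability/connectivity bookkeeping: the naive linear system $\sum_p n_p\delta_p = -e_S$ has spurious solutions corresponding to "floating cycles" of nonterminals disconnected from $S$, and showing that excluding these still leaves a semilinear set requires either (i) a careful finite case analysis on the reachable support as sketched, or (ii) invoking that semilinear sets over $\BN^k$ are closed under the relevant operations and that the reachability predicate is itself expressible in a form (e.g.\ a disjunction over bounded witnesses) that keeps us inside the semilinear world. I would lean on option (i) since it is elementary and self-contained, consistent with the paper's stated aim of using "just some linear algebra."
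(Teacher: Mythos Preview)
Your plan is correct in outline and takes a genuinely different route from the paper's own argument. Both proofs begin with the same reduction: since every occurrence of a production $p$ in a $\widetilde G$\+derivation contributes exactly the fixed block $(_p\,)_p\,[_p\,]_p$, the Parikh vector of $w$ is a fixed linear image of the production\+count vector, and it suffices to show that the set of production\+count vectors is semilinear. From there the two proofs diverge. The paper argues by tree combinatorics: it isolates a finite set of \emph{irreducible} pump trees (no repeated nonterminal on any root\+to\+leaf path) and a finite set of \emph{minimal} base trees, and shows that every parse tree is a minimal tree with finitely many irreducible trees spliced in, yielding a finite union of linear sets. This is essentially the classical pump\+tree decomposition underlying Parikh's original proof, recast inside $\widetilde G$.

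Your approach instead characterises production\+count vectors by the Euler/flow equations $\sum_p n_p\,\delta_p=-e_S$ together with a reachability condition, and then appeals to the semilinearity of $\BN$\+solution sets of linear systems, case\+splitting over the finite data of which nonterminals are used and which productions witness reachability. This is closer in spirit to the ``just linear algebra'' slogan than the paper's own proof. Two points deserve care if you flesh it out. First, the equivalence \emph{balance $+$ reachability $\Longleftrightarrow$ some parse tree realises $(n_p)_p$} is true but not entirely trivial; the forward direction is immediate, while the converse needs a short inductive construction (or a citation), and your plan uses both directions. Second, your treatment of reachability by enumerating spanning certificates is the right idea, but you must also ensure that the support of $(n_p)_p$ does not exceed the declared set $T$ (otherwise you re\+introduce floating cycles), so each piece is really cut out by the balance equations, the vanishing conditions $n_p=0$ for $p$ touching $N\setminus T$, and the finitely many inequalities $n_p\ge 1$ coming from the chosen certificate. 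With those two details pinned down, your argument goes through and gives a clean alternative to the paper's pump\+tree proof.
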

\begin{proof}
	First, we have that $\#_{(_p}(w) = \#_{)_p}(w)$ and $\#_{[_p}(w) = \#_{]_p}(w)$ for every $w \in Dyck_{2|P|} \cap R_L$ and $p \in P$. Denote $\lambda_p(w) = \#_{(_p}(w)$. From condition (1) we have $\#_{p[_p}(w) = \#_{)_p}(w) = \#_{(_p}(w)$ for every production $p$, so $\lambda_p(w)$ is the number of occurences of $p$ in derivation of $w$. We prove now that $$\Omega = \{\Big(\lambda_{p_1}(w), \ldots \lambda_{p_{|P|}}\Big) \mid  w \in Dyck_{2|P|} \cap R_L, \, p \in P\} \subset \BR^{|P|}$$ is semilinear and thus establish semilinearity of $\psi(Dyck_{2|P|} \cap R_L)$.
	
	For parse tree $t$ of $\widetilde{G}$ and $p \in P$ define $\mu_p$ as the number of occurences of $p$ in $t$. Define parse tree $t$ to be {\it irreducible} if its derived string is $w_1 X w_2$ for $w_1, w_2 \in \widetilde{\Sigma}^*$ and $X \in N$ is nonterminal in the root of $t$ and there is no path from leaf to root containing at least two instances of some nonterminal $Y \in N$. If $\sum_{p \in P} \mu_{p}(t) \geqslant 2^{|N| + 2}$ for some parse tree $t$, then $t$ is not irreducible since its depth is at least $|N| + 2$, thus it has at least $|N| + 1$ nonterminals along it. Thus the set $I$ of irreducible parse trees is finite. For parse tree $t$ define
	\begin{align*}
	\psi(t) = \{\Big(\mu_{p_1}(w), \ldots \mu_{p_{|P|}}\Big) \mid  w \in Dyck_{2|P|} \cap R_L, \, p \in P\} \subset \BR^{|P|}.
	\end{align*}
	Define parse tree to be {\it minimal} if it doesn't contain an irreducible subtree. Since the set of irreducible trees is finite, the set of minimal trees is finite since any path from leaf to root should pass through no more than $|I|$ distinct irreducible trees and thus the depth of minimal tree is limited by $(|I| + 1)(|N| + 1)$. Thus the set of all parse trees in $\widetilde{G}$ is the union of $T_m$, the sets of parse trees obtained from $m$ by adding the irreducible trees, and since
	\begin{align*}
	\psi(T_m) = \psi(m) + \Big\langle \{ \psi(t) | t \text{ is irreducible } \} \Big\rangle
	\end{align*}
	is linear, the $\Omega$ is semilinear, an so is $\psi(Dyck_{2|P|} \cap R_L)$.
\end{proof}

\section{Proof of the main theorem}
The proof of the main theorem easily derives from the lemma above and the following proposition.
\begin{lemma}
	For every homomorphism $\phi: \Sigma^* \rightarrow \Gamma^*$ semilinearity $\psi(L)$ implies semilinearity of $\psi(\phi(L))$.
\end{lemma}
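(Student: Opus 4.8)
The plan is to observe that the Parikh image of $\phi(L)$ depends on $L$ only through $\psi(L)$, and is in fact obtained from $\psi(L)$ by applying one fixed monoid homomorphism between free commutative monoids; semilinearity is then preserved for the trivial reason that such a homomorphism commutes with the operations defining linear sets.

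To set this up I would write $\Sigma = \{a_1, \ldots, a_k\}$ and $\Gamma = \{b_1, \ldots, b_m\}$, so that alongside $\psi \colon \Sigma^* \to \BN^k$ there is the Parikh map $\psi_\Gamma \colon \Gamma^* \to \BN^m$ of the target alphabet. For each letter $a_i$ put $\V_i = \psi_\Gamma(\phi(a_i)) \in \BN^m$, and let $\Phi \colon \BN^k \to \BN^m$ be the monoid homomorphism with $\Phi(\mathbf{e}_i) = \V_i$ for $i \in [1;k]$; this is well defined precisely because $\BN^k$ is the free commutative monoid on $\mathbf{e}_1, \ldots, \mathbf{e}_k$, and concretely $\Phi$ is the restriction to $\BN^k$ of the linear map whose matrix has columns $\V_1, \ldots, \V_k$. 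Now $\psi_\Gamma \circ \phi$ and $\Phi \circ \psi$ are both monoid homomorphisms $\Sigma^* \to \BN^m$ agreeing on every generator $a_i$, hence they coincide, and therefore
\[
\psi_\Gamma(\phi(L)) = \{ \psi_\Gamma(\phi(w)) \mid w \in L \} = \{ \Phi(\psi(w)) \mid w \in L \} = \Phi\big(\psi(L)\big).
\]

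Having recorded this reduction, it remains only to prove that a monoid homomorphism $\Phi \colon \BN^k \to \BN^m$ carries semilinear sets to semilinear sets. Since images of maps commute with unions, $\Phi(\bigcup_j S_j) = \bigcup_j \Phi(S_j)$, it is enough to treat a single linear set $S = \V_0 + \langle \V_1, \ldots, \V_N \rangle$. For such $S$, additivity of $\Phi$ gives $\Phi(\V_0 + \sum_{i=1}^N \lambda_i \V_i) = \Phi(\V_0) + \sum_{i=1}^N \lambda_i \Phi(\V_i)$ for all $\lambda_i \in \BN$, so $\Phi(S) = \Phi(\V_0) + \langle \Phi(\V_1), \ldots, \Phi(\V_N) \rangle$ is again \emph{linear}; ranging over the finitely many linear pieces of a semilinear set finishes the argument, and combined with the displayed identity this proves the statement.

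I do not expect a genuine obstacle: every step is a direct unwinding of the definitions of the Parikh map, of a monoid homomorphism, and of a (semi)linear set. The only point worth stating with care is the factorization $\psi_\Gamma \circ \phi = \Phi \circ \psi$, i.e. that the composite $\Sigma^* \to \Gamma^* \to \BN^m$ factors through $\psi$; this is exactly the universal property of $\BN^k$ as the free commutative monoid on $k$ generators, namely that a monoid homomorphism out of $\Sigma^*$ into a commutative monoid is determined by, and may be prescribed arbitrarily on, the images of the letters. Everything else is bookkeeping.
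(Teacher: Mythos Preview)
Your proof is correct and follows essentially the same approach as the paper: define the induced linear map $\Phi\colon \BN^{|\Sigma|}\to\BN^{|\Gamma|}$ sending each standard basis vector to $\psi_\Gamma(\phi(a_i))$, then observe that applying $\Phi$ preserves (semi)linearity. You have simply made explicit the factorization $\psi_\Gamma\circ\phi=\Phi\circ\psi$ and the computation $\Phi(S)=\Phi(\V_0)+\langle\Phi(\V_1),\ldots,\Phi(\V_N)\rangle$, both of which the paper leaves implicit in its one-line sketch.
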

\begin{proof}
	Homorphism $\phi: \Sigma^* \rightarrow \Gamma^*$ induces the linear map
	\begin{equation*}
	\Phi: \BN^{|\Sigma|} \rightarrow \BN^{|\Gamma|},~~~ (\underbrace{0, \ldots 0}_{i-1}, 1, \underbrace{0, \ldots 0}_{|\Sigma| - i}) \mapsto (\#_1(\Phi(a_i)), \ldots \#_{|\Gamma|}(\Phi(a_i)))
	\end{equation*}
	so we just linearly change generators of linear subsets and preserve the semilinearity.
\end{proof}
\section{Acknowledgements}
I came up with this when I was preparing a home task for my MIPT second-year undergraduate students; I wanted to give a problem which could use Parikh or Chomsky-Sch\"utzenberger theorem. Then I realized that this very proof was never met in literature before. I gratefully thank those brilliant sharp-minded kids, especially from group 673, who I teach with such great delight and wish they will discover something quite powerful and elegant in the future.

I would like to thank Sergey P. Tarasov and Alexander Rubtsov for useful discussions.
 

\hbadness=1100

\end{document}